\newcommand{\norm}[1]{\left\lVert#1\right\rVert}
\newcommand*{\rom}[1]{\expandafter\@slowromancap\romannumeral #1@}
\titlespacing*{\section}{0pt}{\baselineskip}{\baselineskip}
\titleformat{\subsection}{\normalfont\bfseries}{\thesubsection.}{1em}{}
\titleformat{\subsubsection}{\normalfont}{\thesubsubsection.}{1em}{\itshape}
\theoremstyle{plain}
\newtheorem{theorem}{Theorem}[section]
\newtheorem{lemma}[theorem]{Lemma}
\theoremstyle{definition}
\theoremstyle{remark}
\newtheorem{remark}[theorem]{Remark}
\numberwithin{equation}{section}
\let\pa\partial
\newcommand{\R}{\mathbb R}
\newcommand{\bA}{\mathbf A}
\newcommand{\bD}{\mathbf D}
\newcommand{\bH}{\mathbf H}
\newcommand{\bI}{\mathbf I}
\newcommand{\bP}{\mathbf P}
\newcommand{\bG}{\mathbf G}
\newcommand{\bb}{\mathbf b}
\newcommand{\bg}{\mathbf g}
\newcommand{\blf}{\mathbf f}
\newcommand{\bn}{\mathbf n}
\newcommand{\be}{\mathbf e}
\newcommand{\bp}{\mathbf p}
\newcommand{\bt}{\mathbf t}
\newcommand{\bu}{\mathbf u}
\newcommand{\bU}{\mathbf U}
\newcommand{\bbf}{\mathbf f}
\newcommand{\divG}{{\mathop{\,\rm div}}_{\Gamma}}
\newcommand{\gradG}{\nabla_{\Gamma}}
\newcommand{\nablaG}{\nabla_{\Gamma}}
\newcommand{\bid}{\mathbf {id}}
\renewcommand{\div}{\textrm{div}\ \!}
\newcommand{\tr}{{\rm tr}}
\newcommand*\Laplace{\mathop{}\!\mathbin\bigtriangleup}
\newcommand{\jump}[1]{[#1]}
\title{A note on balance laws of \\coupled surface and bulk fluid flows}
\date{August 1, 2020}
\author[1]{Qi Sun}
\author[2]{Vladimir Yushutin}
\affil[1]{Department of Mathematics, University of Houston}
\affil[2]{Department of Mathematics, University of Maryland}
\begin{document}

\maketitle
\begin{abstract}
Balance laws of coupled bulk-surface-bulk fluid flows are considered. We carefully investigate how do the interface conditions influence the total mass, total momentum and total energy of the system.
\end{abstract}
\section{Introduction}
There is a growing interest in development of numerical methods for flows on thin  fluidic membranes. In this note we aim to present and to examine models of a two-phase bulk flow of Newtonian fluid coupled with a surface flow of Boussinesq-Scriven fluid. Our main concern are the balance laws of mass, momentum and energy which can be easily violated by improper coupling conditions.

Let $\Omega\subset \mathbb{R}^d$ be a given domain, $d=2$ or $d=3$. $\partial\Omega$ at least lipschitz smooth. Consider a time dependent at least $C^2$ smooth closed hyper-surface $(\Gamma(t))_{t\in [0,T]}$ separating $\Omega=\Omega_\pm$ into two domains, $\Omega_{+}(t)$ and $\Omega_{-}(t):=\Omega\setminus \overline{\Omega}_{+}(t) $. We let $\bn$ denote the unit normal on $\Gamma(t)$ pointing towards $\Omega_{+}(t)$, see Fig.\ref{fig:Geom}. For any vector field $\bu$  on $\Gamma$ we define the normal and tangential components, correspondingly: 
\begin{align}\label{comp_G}
u_N:=\bu \cdot \bn\,,\qquad{}\bu_T := \bP \bu = \bu-u_N\bn\end{align}
where $\bP=\bI - \bn^T\bn$ is the projector on the tangent space $T\Gamma(t)$. It is assumed that $(\Gamma(t))_{t\in[0,T]}$ is a $C^2$ smooth hypersurface without boundary that evolves with a time-dependent vector field
${\mathcal{V}(t)}$ and we would like to stress on the pure geometrical nature of this vector field that later will be coupled with physical quantities. 
\begin{figure}[H]\label{Geom}
    \centering
    \includegraphics[scale=0.35]{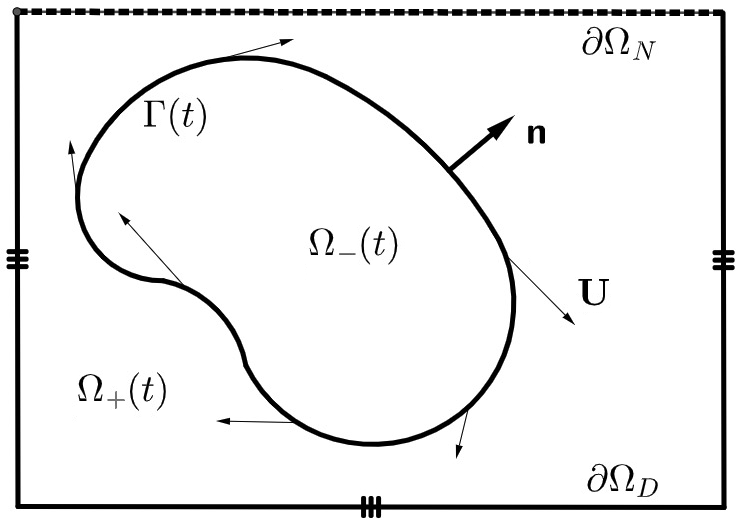}
    \caption{An illustration for the geometrical setup of the problem, $d=2$}
    \label{fig:Geom}
\end{figure}
The structure of the note is as follows. In Section \ref{NS} we briefly discuss   bulk Newtonian flows on $\Omega_\pm(t)$ coupled with two classical interface conditions on $\Gamma(t)$. Section \ref{surfacNS} introduces the surface analogon of Navier-Stokes equations on $\Gamma(t)$ based on the Boussinesq-Scriven stress tensor, and it is studied from the perspective of applicability as a stationary in space interface condition. Fully coupled, consistent models of flows on $\Omega_-(t) -\Gamma(t) - \Omega_+(t)$ are presented in Section \ref{coupledNS} and the balance laws of momentum and energy are derived. In the remainder of the note some simplified models of stationary fluidic interface, $\Gamma(t)=\Gamma$, are considered and the resulting inconsistencies of balance laws are identified. 

We assume that a reader is familiar with the  tangential calculus.

\section{Two-phase Navier-Stokes equation}\label{NS}

Let $\bu:\Omega_\pm\times [0,T]\to \mathbb{R}^d$ and  $p:\Omega_\pm\times[0,T]\to \mathbb{R}$ be the bulk fluid velocity and the pressure.  The restrictions $\bu^+, p^+$ and $\bu^-, p^-$ are assumed to be $C^1$ and define  bulk flows of separate phases on $\Omega_+$ and $\Omega_-$, correspondingly, which should be coupled by an interface condition on $\Gamma(t)$. We let $$\jump{\bu}:=\bu^--\bu^+$$ denote the jump of bulk velocity; similar notation is used for jumps of other quantities. 

The incompressible Navier-Stokes system for both phases with a prescribed velocity on $\partial\Omega_D$ and presribed stresses on $\partial\Omega_N$ such that $\partial\Omega_+= \partial\Omega_D \cup \partial\Omega_N \cup \Gamma(t)$ reads
\begin{align}
    \rho\, \dot{\bu}&=\div\sigma\qquad \qquad &\text{ on }\Omega_{\pm}(t)\\
    \nabla\cdot{}\bu&=0 &\text{ on }\Omega_{\pm}(t)\\
    \bu&=\bg &\text{ on }\partial\Omega_D\\
    \bt &= \bbf &\text{ on }\partial\Omega_N
\end{align}
where $\dot{\bu}=\bu{}_t + (\nabla \bu) \bu$ is the material derivative, $\sigma=-p\,I+ \mu (\nabla \bu+\nabla^T\bu)$  is the Newtonian stress tensor,  $\rho=\rho^-H+\rho^+(1-H)$ for the indicator function $H$ of $\Omega_-$; the constant densities $\rho_{\pm}$ and the dynamic viscosity $\mu=\mu_{\pm}$ are material parameters of the phases, and $\bt=\bt(\bu)=\sigma\tilde{\bn}$ is the stress vector on $\pa\Omega{}$ oriented by outward pointing $\tilde{\bn}$.
The classical interface conditions on a moving interface:
\begin{itemize}
\item[AI]\textit{Continuous coupling}
\begin{align}
    \jump{\bu}&={0}\qquad \qquad &\text{ on }\Gamma(t)\\
       \jump{\sigma}\bn&={0}&\text{ on }\Gamma(t)\label{ Interface Condition}
   \end{align}
  \item[AII]\textit{Friction slip}
   \begin{align}
    \jump{\bu}\cdot \bn&={0}\qquad \qquad &\text{ on }\Gamma(t)\\
     \bP{\sigma_-\bn}&=-f(\bP\bu^--\bP\bu^+)  &\text{ on }\Gamma(t)\\
  \bP{\sigma_+\bn}&=f(\bP\bu^+-\bP\bu^-)  &\text{ on }\Gamma(t)\\
       \bn\cdot \jump{\sigma}\bn&={0}&\text{ on }\Gamma(t)\label{Jump Interface Condition}
   \end{align}
   \end{itemize}
    where $f$ is the friction coefficient. Note that in both cases the stress vector is continuous across the interface $\Gamma(t)$, $\jump{\sigma}\bn={0}$, which is the well-known in continuum mechanics condition of local conservation of the  momentum flux. 
   
   The geometrical evolution of the interface $\Gamma(t)$ between domains is coupled with the material velocities on $\Omega_+(t)$ and $\Omega_-(t)$:
\begin{align}
    \mathbf{\mathcal{V}}\cdot\bn&=\bu^+\cdot\bn=\bu^-\cdot\bn&\text{ on }\Gamma(t)
   \end{align}
   As we shall see, this coupling leads to a compatibility condition $\int_{\Gamma(t)}\mathcal{V}\cdot{}\bn\,dS=0$.
   \subsection{Mass, momentum and energy  balance laws} 
   We start by showing that continuity of normal component of velocity  across the interface is essential to the mass balance laws. We employ the generalized Reynolds transport theorem to obtain the balance laws of mass $M_-$, momentum $Q_-$ and kinetic energy $E_-$ of the phase that occupies $\Omega_-$:
   \begin{align*}
        &\frac{d}{dt}M_{-}=\frac{d}{dt}\int_{\Omega_-(t)}\!\!\rho \,dV=\int_{\Omega_-(t)}\rho_t  \,dV+\int_{\Gamma(t)}\rho\mathcal{V}\cdot{}\bn=\int_{\Gamma(t)}\rho\bu\cdot{}\bn  \,dS=\int_{\Omega_-(t)}\rho\,\div\bu \,dV=0
        \end{align*}
\begin{align*}
        &\frac{d}{dt}Q_-=\frac{d}{dt}\int_{\Omega_-(t)}\rho \bu\,dV=\int_{\Omega_-(t)}\dot{\rho}\bu+\rho\dot{\bu}+\rho\bu(\nabla\cdot\bu) \,dV=\int_{\Omega_-(t)}\div{}\sigma \,dV =\int_{\Gamma(t)}\sigma\bn \,dS 
        \end{align*}
        \begin{align*}
        &\frac{d}{dt}E_-=\frac{1}{2}\frac{d}{dt}\int_{\Omega_-(t)}\!\!\rho (\bu\cdot\bu)\,dV
        =\int_{\Omega_-(t)}\rho \bu\cdot{}\dot{\bu}\,dV=\int_{\Omega_-(t)} \bu\cdot \div{}\sigma{}\,dV\\&=
       \int_{\Omega_-(t)}\div{}(\sigma{}\bu)\,dV -\int_{\Omega_-(t)}\sigma:\nabla\bu\,dV 
       =\int_{\Gamma(t)} \bn\cdot \sigma{}\bu\,dS -\frac12\int_{\Omega_-(t)} \sigma : (\nabla \bu + \nabla{}\bu^T)dV \\
       &=\int_{\Gamma(t)} \bu\cdot \sigma{}\bn\,dS -\int_{\Omega_-(t)} (-p I+ 2\mu D\bu) : D\bu\,dV=\int_{\Gamma(t)} \bu\cdot \sigma{}\bn\,dS -2\mu\int_{\Omega_-(t)}\| D\bu \|^2\,dV 
      \end{align*}
      where $2 D\bu:=(\nabla \bu + \nabla{}\bu^T)$.
          Similarly, we derive the balance laws for $\Omega_+(t)$ taking into account that $\bn$ points inward to it:
          \begin{align*}
        \frac{d}{dt}M_{+}=-m\,,
        \quad\frac{d}{dt}Q_+=-\int_{\Gamma}\sigma\bn \,dV +r
        \end{align*}
        \begin{align*}
        &\frac{d}{dt}E_+=-2\mu\int_{\Omega_+}\norm{D\bu}^2 dV - \int_{\Gamma}\bu\cdot{}\sigma{}\bn\,dS +R 
\end{align*}

where 
 \begin{align}
m=\int_{\partial{}\Omega}\rho\,\bu\cdot\tilde{\bn}\ \,dS\,,\quad
        &r=  \int_{\pa\Omega_D}\bt{}\,dS+\int_{\pa\Omega_N}{}\bbf{}\,dS
        \end{align}
\begin{align}R=\int_{\pa\Omega}\bu\cdot{}\bt{}\,dS=\int_{\pa\Omega_D}\bg\cdot{}\bt{}\,dS+\int_{\pa\Omega_N}\bu\cdot{}\bbf{}\,dS\end{align} 
is the total outflow rate, the total force and the total mechanical power imposed on the outer boundary $\partial\Omega$.

 Finally, we obtain the total balance laws of bulk phases:
 \begin{align}
\frac{dM}{dt}=\frac{d}{dt}(M_++M_-)=-m\,,\quad
        \frac{dQ}{dt}=\frac{d}{dt}(Q_++Q_-)=\int_{\Gamma}\jump{\sigma{}}\,\bn\,dS +r 
\end{align}     
 \begin{align}
        &\frac{dE}{dt}=\frac{d}{dt}(E_++E_-)=-\int_{\Omega_\pm}2\mu\norm{D \bu}^2 dV + \int_{\Gamma}\jump{\bu\cdot{}\sigma{}}\,\bn\,dS +R
\end{align}
\begin{remark}
The term $ \int_{\Gamma}\jump{\sigma{}}\,\bn\,dS$ is the total interface force which is zero in both interface models AI and AII. The term $ \int_{\Gamma}\jump{\bu\cdot{}\sigma{}}\,\bn\,dS$ is the interface dissipation rate. In case of the continuous coupling model AI this term is zero, while in the case of friction slip model AII it is equal to $-f(\jump{\bu})^2:=-F$.
\end{remark}
\section{Evolving surface Navier-Stokes equation}\label{surfacNS}
We assument the interface $\Gamma$ to be a $C^2$ smooth surface. To formulate equations, we need to further define some notations. There exists a neighborhood $\mathcal{O}(\Gamma)$  of $\Gamma$, we are able to define a well-defined closest point projection as following $\bp:\,\mathcal{O}(\Gamma)\to \Gamma$.  For a surface pressure function $\pi:\, \Gamma \to \mathbb{R}$ or a vector-valued surface velocity field function $\bU:\, \Gamma \to \mathbb{R}^d$  we define $\pi^e=\pi\circ \bp\,:\,\mathcal{O}(\Gamma)\to\mathbb{R}$, $\bU^e=\bU\circ \bp\,:\,\mathcal{O}(\Gamma)\to\mathbb{R}^d$,   extensions of $\pi$ and $\bU$ from $\Gamma$ to its neighborhood $\mathcal{O}(\Gamma)$ along the normal directions.
 The surface gradient and covariant derivatives on $\Gamma$ are then defined as $\nablaG \pi=\bP\nabla \pi^e$ and  $\nabla_\Gamma \bU:= \bP \nabla \bU^e \bP$. The definitions  of surface gradient and covariant derivatives are  independent of a particular smooth extension of $\pi$ and $\bU$ off $\Gamma$.\\
 On $\Gamma$ we consider the surface rate-of-strain tensor 
 given by
\begin{equation} \label{strain}
 \bD_\Gamma(\bU):=\bP \bD\bP= \frac12 \bP (\nabla \bU +\nabla \bU^T)\bP = \frac12(\nabla_\Gamma \bU + \nabla_\Gamma \bU^T).
 \end{equation}
Surface divergence for a vector field $\bg: \Gamma \to \R^3$ and
a tensor field $\bA: \Gamma \to \mathbb{R}^{3\times 3}$ are point-wise defined as:
\[
 \divG \bg  := \tr (\gradG \bg), \qquad
 \divG \bA  := \left( \divG (\be_1^T \bA),\,
               \divG (\be_2^T \bA),\,
               \divG (\be_3^T \bA)\right)^T,
               \]
with $\be_i$ the $i$th basis vector in $\R^3$.
 Surface Navier-Stokes equation is a  model of a thin inextensible fluid layer of constant density  $\rho_\Gamma$ and viscosity $\mu_\Gamma$ approximated by a two-dimensional surface $\Gamma(t)$ that moves with velocity $\bU$ subjected to an external force $\bb$. The Boussinesq-Scriven model assumes that stress vectors, which are reactions of the fluidic material, are tangential to the surface.
\begin{itemize}
\item[NS]\textit{Surface Navier-Stokes}
\begin{align}
    \rho_\Gamma \dot{\bU} &= \divG \sigma_\Gamma + \bb&\text{ on }\Gamma(t)\\
    \divG \bU&=0&\text{ on }\Gamma(t)\\
    \sigma_\Gamma &= -\pi\bP +2\mu_\Gamma D_\Gamma(\bU)&\text{ on }\Gamma(t)
\end{align}
\end{itemize}
 It is natural  to link the geometrical evolution of the bulk interface with the velocity $\bU$ of the surface system $\mathcal{NS}_\Gamma(\bU)=\bb$  that occupies it:
\begin{align}
    \mathbf{\mathcal{V}}\cdot\bn&=U_N&\text{ on }\Gamma(t)
   \end{align}

Because of inextensibility of the fluidic layer and the absense of boundary of $\Gamma(t)$ the surface mass $\int_{\Gamma(t)}\rho_\Gamma\bU\,dS$ is conserved according to surface Reynolds transport theorem. To derive the surface momentum and the surface energy balance laws we will need the following 
\begin{lemma}
For a vector $\bg\in$ $({C}^1(\Gamma))^d$ and a matrix $\bG\in ({C}^1(\Gamma))^{d\times{}d}$ such that $\bG=\bP\bG$ we have
\begin{equation}
    \int_{\Gamma} \bg\cdot\divG\bG\bP\,dS=-\int_{\Gamma} \bG:\nabla_{\Gamma}\bg\, dS\label{lemma div sigma}
\end{equation}
\end{lemma}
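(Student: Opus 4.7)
The identity is a surface integration-by-parts formula, and my plan is to package both sides as pieces of a single surface divergence that integrates to zero. The essential analytic input is the boundaryless surface divergence theorem on the closed hypersurface $\Gamma(t)$: for any tangential field $\bv$ (that is, $\bv = \bP\bv$) one has $\int_\Gamma \divG \bv \, dS = 0$.

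Concretely, I would form the auxiliary vector field $\bv := (\bG\bP)^T \bg = \bP\bG^T \bg$. The hypothesis $\bG = \bP\bG$ forces $\bG\bP = \bP\bG\bP$ to be fully tangential, so $\bv$ is pointwise tangent to $\Gamma$. A direct index calculation from the row-wise definition $(\divG \bA)_i = \divG(\be_i^T \bA)$ produces the pointwise product rule
\[
\divG\!\bigl((\bG\bP)^T \bg\bigr) \;=\; \bg \cdot \divG(\bG\bP) \;+\; (\bG\bP) : \nablaG \bg,
\]
the surface analogue of the Euclidean identity $\div(\bA^T \bb) = \bb \cdot \div \bA + \bA : \nabla \bb$. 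Integrating over $\Gamma$ and invoking the divergence theorem to kill the left-hand side gives
\[
\int_\Gamma \bg \cdot \divG(\bG\bP)\,dS \;=\; -\int_\Gamma (\bG\bP) : \nablaG \bg\,dS.
\]

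The final step is purely algebraic: the trailing $\bP$ in $\bG\bP$ is absorbed against $\nablaG \bg$. Since $\nablaG \bg = \bP\,\nabla \bg^e\, \bP$ satisfies $\nablaG \bg\,\bP = \nablaG \bg$ by idempotency of $\bP$, the cyclic property of the trace yields
\[
(\bG\bP) : \nablaG \bg \;=\; \tr\!\bigl(\bP\bG^T \nablaG \bg\bigr) \;=\; \tr\!\bigl(\bG^T \nablaG \bg\,\bP\bigr) \;=\; \tr\!\bigl(\bG^T \nablaG \bg\bigr) \;=\; \bG : \nablaG \bg,
\]
which delivers the desired identity. I expect the obstacle to be notational rather than analytic: one must parse $\divG \bG\bP$ as $\divG(\bG\bP)$ and keep careful track of which side of $\bG$ and $\nablaG \bg$ each projector acts on. The closedness of $\Gamma$ is the essential geometric hypothesis; with boundary, an additional conormal line integral would appear on the right-hand side.
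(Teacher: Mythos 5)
Your proof is correct and follows essentially the same route as the paper's: both reduce the matrix identity row-wise to the closed-surface integration-by-parts formula, use tangentiality (here of $\bG\bP$, there of $\blf=\bP\blf$) to kill the curvature term $\int_\Gamma \kappa\, g\,\blf\cdot\bn\,dS$, and finish with the same projector-idempotency and trace-cyclicity algebra. The only difference is that you invoke the tangential surface divergence theorem as a black box, whereas the paper derives it in its Steps 1--2 from scalar integration by parts and the identity $-\Delta_\Gamma\bid_i=\kappa\,\bn_i$.
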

\begin{proof}
In order to prove the lemma we need 3 steps.

Step 1:\\
For an arbitrary $g\in$ ${C}^1(\Gamma)$ and for all $i\in[1,d]$ consider \begin{align*}\int_{\Gamma}(\nabla_{\Gamma}^Tg)_i \,dS=\int_{\Gamma}(\bP{}\nabla_{\Gamma}^Tg)_i\, dS=\int_{\Gamma}\nabla_{\Gamma}(\bid_i) \cdot{}\nabla_{\Gamma}^Tg\, dS=-\int_{\Gamma}g\Laplace_{\Gamma}\bid_i\,dS=\int_{\Gamma}g\kappa\bn_i\,dS
\end{align*}
where we used the integration by parts for scalar functions on a surface without boundary.

Step 2:\\
For an arbitrary $\blf\in C^1(\Gamma)^{d}$ such that $\blf=\bP\blf$ we replace $g$ with $\blf_i\,g$ and sum for all $i\in[1,d]$:
    \begin{align*}\label{step2}
    \int_{\Gamma}g \divG{} \blf+\blf\cdot(\nabla_{\Gamma}^Tg)\,dS=\int_{\Gamma}\kappa{} g \blf \cdot \bn\, dS
    \end{align*}
    which implies
\begin{align}
    \int_{\Gamma}g\, \div_{\Gamma}\blf\,dS=-\int_{\Gamma}\blf\cdot \nabla_{\Gamma}^Tg dS
\end{align}
Step 3:\\
We let $\bG_i$ denote the $i$-th row of matrix $\bG$ and
\begin{align*}
          &\int_{\Gamma}\bg\cdot\div_{\Gamma}\bG\bP\,dS=\sum_{i=1}^3\int_{\Gamma}g_i\div_{\Gamma}(\be_i^T\bG\bP)^T\,dS=\sum_{i=1}^3\int_{\Gamma}g_i\div_{\Gamma}(\bP\bG^T_i)\,dS
          =-\sum_{i=1}^3\int_{\Gamma}\bP\bG_i^T\cdot \nabla_{\Gamma}^T g_i\,dS\\
          &=-\sum_{i=1}^3\int_{\Gamma} \nabla_{\Gamma}g_i\bP\bG_i^T\,dS=-\sum_{i=1}^3\int_{\Gamma} \nabla{}g_i\bP\bG_i^T\,dS= -\int_{\Gamma}\tr{}(\nabla{}\bg\bP\bG^T) \,dS\\
         &=-\int_{\Gamma}\tr{}(\nabla{}\bg\bP\bG^T \bP )\,dS=-\int_{\Gamma}\tr{}(\bP\nabla{}\bg\bP\bG^T )\,dS=-\int_{\Gamma}\tr{}(\gradG\bg\bG^T )\,dS=-\int_{\Gamma}\bG: \gradG\bg\,dS
\end{align*}
\end{proof}

Now we are ready to show the balance laws for the momentum $Q_\Gamma$ and the kinetic energy $E_\Gamma$ of a fluidic interface with the help of \eqref{lemma div sigma}:
\begin{align}
        &\frac{d}{dt}Q_\Gamma=\frac{d}{dt}\int_{\Gamma(t)}\!\!\rho_\Gamma \bU\,dS=\int_{\Gamma(t)}\left(\rho_\Gamma\dot{\bU}+\dot{\rho}_\Gamma{\bU}+\rho_\Gamma\bU\divG\bU\right) \,dS
        =\int_{\Gamma(t)}\rho_\Gamma \dot{\bU}\,dS\nonumber\\
        &=\int_{\Gamma(t)}  (\divG{}\sigma{}_\Gamma +\bb)\,dS =\int_{\Gamma(t)}   \bb\,dS
        \end{align}
\begin{align}
        &\frac{d}{dt}E_\Gamma=\frac{1}{2}\frac{d}{dt}\int_{\Gamma(t)}\!\!\rho_\Gamma (\bU\cdot\bU)\,dS=\frac{1}{2}\int_{\Gamma(t)}\left(2\rho_\Gamma\bU\cdot{}\dot{\bU}+\rho_\Gamma(\bU\cdot\bU)\divG\bU\right) \,dS\nonumber\\
        &=\int_{\Gamma(t)}\rho_\Gamma \bU\cdot{}\dot{\bU}\,dS=\int_{\Gamma(t)} \bU\cdot (\divG{}\sigma{}_\Gamma +\bb)\,dS=\int_{\Gamma(t)} \bU\cdot\bb\,dS-\int_{\Gamma(t)} \tr (\sigma_{\Gamma}\nabla_{\Gamma}\bU)\, dS \label{lemma 3.2 application} \nonumber\\
                 &=\int_{\Gamma(t)} \bU\cdot \bb\,dS -2\mu_\Gamma\int_{{\Gamma(t)}} \| D_\Gamma{}\bU\|^2  \,dS
\end{align}

       where we clearly see the total external force, the total external work in the total viscous dissipation of the fluidic interface.

   \subsection{Surface Navier-Stokes as a stationary interface}
Here we would like to understand under which conditions the Bousinesque-Scriven model can be used to model a stationary in space fluidic interface. We start by splitting the surface Navier-Stokes system into normal and tangential parts (see e.g. in \cite{jankuhn2017incompressible}):
\begin{align}
        \rho_{\Gamma}\bP\dot{\bU}_T&=-\nabla_{\Gamma}^T\pi+2\mu_{\Gamma}P\divG D_\Gamma(\bU)+b_T-\rho_{\Gamma} U_N\dot{\bn}&\text{ on }\Gamma(t)\\
        \rho_{\Gamma}\dot{U}_N&=2\mu_{\Gamma}\bn\cdot\divG D_\Gamma(\bU)+\pi \kappa+b_N+\rho_{\Gamma}\dot{\bn}\cdot \bU_T &\text{ on }\Gamma(t)\\
         \divG \bU &=0 &\text{ on }\Gamma(t)
   \end{align}
   Let us consider surface Euler equations assuming $\mu_\Gamma=0$ and demonstrate how momentum and energy split in the normal and tangential directions. Taking inner product of the first and the second equations with $\bU_T$ and  $U_N$  correspondingly we arrive at the energy law:
     \begin{align*}
         &\int_{\Gamma(t)}\rho_\Gamma\left(\dot{\bU}_T\cdot \bU_T + \dot{U}_N{}U_N\right) \,dS=\int_{\Gamma(t)}(\bU\cdot\bb +U_N\pi\kappa - \bU_T \cdot{} \gradG^T\pi )\,dS\\
         &=\int_{\Gamma(t)}\bU\cdot\bb \,dS+\int_{\Gamma(t)}\pi(U_N\kappa +  \divG\bU_T)  \,dS=\int_{\Gamma(t)}\bU\cdot\bb \,dS
     \end{align*}
     Since 
     \begin{align*}
        \dot{\bU}\cdot\bU&=(\dot\bU_T + \dot{U}_N \bn + U_N \dot{\bn})\cdot(\bU_T+U_N\bn)=\dot\bU_T\cdot\bU_T + \dot{U}_N \cdot{}U_N \\ 
        +& U_N{}(\dot\bU_T\cdot\bn + \bU_T\cdot\dot\bn )=\dot\bU_T\cdot\bU_T + \dot{U}_N \cdot{}U_N  
        \end{align*}
        we obtain the splitting of the surface energy balance law
        \begin{align*}\frac{dE_\Gamma}{dt}=\frac12\frac{d}{dt}\int_{\Gamma(t)}\rho_\Gamma{\bU}^2 \,dS=\frac12\frac{d}{dt}\int_{\Gamma(t)}\rho_\Gamma\left({\bU}^2_T + U_N^2\right) \,dS=\int_{\Gamma(t)}\bU\cdot\bb \,dS
        \end{align*}

Now we integrate the first and the second equations assuming $\mu_\Gamma=0$ to obtain the directional split of the momentum law:
     \begin{align*}
         &\int_{\Gamma(t)}\rho_\Gamma\left( \bP\dot{\bU}_T + \dot{U}_N\bn\right) \,dS=\int_{\Gamma(t)}(\bb +\pi\kappa\bn - \gradG^T\pi  -\rho_{\Gamma} U_N\dot{\bn} + \rho_{\Gamma}(\dot{\bn}\cdot \bU_T) \bn)\,dS\\
        &=\int_{\Gamma(t)}\bb \,dS-\int_{\Gamma(t)}\rho_{\Gamma} (U_N\dot{\bn} + ({\bn}\cdot \dot{\bU}_T) \bn) \,dS
     \end{align*}
    where $\dot{\bn}\cdot{\bU}_T =-{\bn}\cdot \dot{\bU}_T $ is used. Noticing $\dot{\bP}\bU_T+\bP\dot{\bU}_T=\dot{\bU}_T$ and using
     \begin{align}
        \dot{\bU}&=\dot\bU_T + \dot{U}_N \bn + U_N \dot{\bn}=\bP\dot\bU_T + \dot{\bP}\bU_T + \dot{U}_N \bn + U_N \dot{\bn}
        \end{align}
        we conclude that a part of the momentum balance is the change of velocity on the surface and another part is due to the geometrical evolution:
        \begin{align}\frac{dQ_\Gamma}{dt}=\int_{\Gamma(t)}\rho_\Gamma\left( \bP\dot{\bU}_T + \dot{U}_N\bn\right) \,dS +\int_{\Gamma(t)}\rho_{\Gamma} (U_N\dot{\bn} + \dot{\bP}\bU_T)\,dS=\int_{\Gamma(t)}\bb \,dS
        \end{align}
     It is easy to see that the last statement also holds in case of $\mu_\Gamma\neq0$. 

Now we derive a model of a fluidic interface that corresponds to a stationary surface in space, i.e. $U_N=0$. Unfortunately, a naive insertion of this condition into the surface Navier-Stokes does not lead to a consistent system. Indeed, set $U_N=0$ and have
   \begin{align}
        \rho_{\Gamma}\bP\dot{\bU}_T&=-\nabla_{\Gamma}^T\pi+2\mu_{\Gamma}\bP\divG D_\Gamma(\bU_T)+\bb_T&\text{ on }\Gamma(t)\\
        0&=-2\mu \tr(H\nabla_\Gamma \bU_T) +\pi \kappa+b_N+\rho_{\Gamma}\bU_T\cdot \bH\bU_T &\text{ on }\Gamma(t)\\
         \divG \bU_T &=0 &\text{ on }\Gamma(t)
   \end{align}
   where we used the identity $\dot{\bn}=\bH \bU_T - \gradG^TU_N$.
   It is clear that the first and the third equation define a geometric PDE (with the  covariant material derivative $\bP\dot{\bU}_T$) on a stationary surface with a solution $\bU_T$ and $\pi{}$. However, the second equation will be satisfied if only there is an external force $b_N$ that balances other normal forces:
   \begin{align}
       b_N=-\pi \kappa +(2\mu \tr(H\nabla_\Gamma \bU_T) - \rho_{\Gamma}\bU_T\cdot \bH\bU_T)
   \end{align}
   We are assuming that this normal force is always applied to keep the position of the surface, and is composed of a Laplace force, $-\pi\kappa$, and a normal force $N(\bU_T)$ that depends on the tangential motion:
   \begin{align}
       N(\bU_T)=2\mu \tr(H\nabla_\Gamma \bU_T) - \rho_{\Gamma}\bU_T\cdot \bH\bU_T
   \end{align}  
   where the first term is a normal viscous traction caused by the curvatures and the last term is so-called centripetal force. Essentially, all three normal forces need to be balanced by the external $b_N$ to keep the position of the surface.
   
   The tangential system, which we denote as $\mathcal{NS}^*(\bU_T)=\bb_T$, is coupled with the external to the surface tangent force as follows:
   \begin{align}
        \rho_{\Gamma}\bP\dot{\bU}_T&=-\nabla_{\Gamma}^T\pi+2\mu_{\Gamma}P\divG D_\Gamma(\bU_T)+\bb_T&\text{ on }\Gamma(t)\\
         \divG \bU_T &=0 &\text{ on }\Gamma(t)
   \end{align}
 Let us summarize the suggested model of a stationary fluidic interface:
   \begin{itemize}
\item[NS*]\textit{Stationary surface Navier-Stokes}
\begin{align}
   U_N&=0\\
   \mathcal{NS}^*(\bU_T)&=\bb_T\\
    b_N&=-\pi \kappa + N(\bU_T)
   \end{align}
   \end{itemize}
   
      The momentum and the energy balance laws of a stationary fluidic interface can be derived similarly to the general case. We consider the equation $\mathcal{NS}^*(\bU_T)=\bb_T$ solely  and compute the following:
   \begin{align}
  &\int_{\Gamma(t)}\rho_\Gamma \dot{\bU}_T  \,dS=\frac{\delta{}Q^p_\Gamma}{\delta{}t}+\frac{\delta{}Q^c_\Gamma}{\delta{}t}= \int_{\Gamma(t)}\rho_\Gamma \bP\dot{\bU}_T  \,dS +\int_{\Gamma(t)}\rho_{\Gamma} \dot{\bP}\bU_T\,dS\\
   &=\int_{\Gamma(t)}(-\nabla_{\Gamma}^T\pi+2\mu_{\Gamma}\bP\divG D_\Gamma(\bU_T)+\bb_T)\,dS
   +\int_{\Gamma(t)}\rho_{\Gamma}(\dot{\bU}_T\cdot\bn)\bn\, dS\\
   &=\int_{\Gamma(t)}\bb_T\,dS+\int_{\Gamma(t)}(-\pi\kappa\bn+2\mu_{\Gamma}\bP\divG D_\Gamma(\bU_T)-\rho_{\Gamma}( \bU_T \cdot \bH \bU_T)\bn\,dS\\
    &=\int_{\Gamma(t)}\bb_T\,dS+\int_{\Gamma(t)}(-\pi\kappa\bn+2\mu \tr(H\nabla_\Gamma \bU_T)-\rho_{\Gamma}( \bU_T \cdot \bH \bU_T)\bn\,dS\\
      &=\int_{\Gamma(t)}\bb_T\,dS+\int_{\Gamma(t)}(-\pi \kappa + N(\bU_T))\bn\,dS\\
       &=\frac{\delta{}Q^*_\Gamma}{\delta{}t} +\int_{\Gamma(t)}\rho_{\Gamma} (\bU_T \cdot \bH \bU_T)\bn\,dS=\frac{d{}Q^*_\Gamma}{dt}
        \end{align}
        
        Similarly,
         \begin{align}
 &\frac{d}{dt}E^*_\Gamma=\frac12\frac{d}{dt}\int_{\Gamma(t)}\rho_\Gamma \bU_T^2 \,dS=\int_{\Gamma(t)}\rho_\Gamma \bU_T\cdot\dot{\bU}_T  \,dS= \int_{\Gamma(t)}\rho_\Gamma \bU_T\cdot\bP\dot{\bU}_T  \,dS\\
 &=\int_{\Gamma(t)}\bU_T\cdot \bb_T \,dS+\int_{\Gamma(t)}(-\bU_T\cdot\nabla_{\Gamma}^T\pi+2\mu_{\Gamma}\bU_T\cdot\bP\divG D_\Gamma(\bU_T))\,dS\\
 &=\int_{\Gamma(t)}\bU_T\cdot \bb_T \,dS +\int_{\Gamma}\pi\, \div_{\Gamma}\bU_T\,dS -2\mu_{\Gamma}\int_{\Gamma} D_\Gamma(\bU_T):\nabla_{\Gamma}\bU_T\, dS\\
  \end{align}
        
        Finally, the following a priori estimates hold:
        \begin{align}
        &\frac{d{}Q^*_\Gamma}{dt}=\int_{\Gamma(t)}\bb_T \,dS+\int_{\Gamma(t)}(-\pi \kappa + N(\bU_T))\bn\,dS\\
        &\frac{dE^*_\Gamma}{dt} =\int_{\Gamma(t)}   \bU_T\cdot\bb_T\,dS -2\mu_\Gamma\int_{{\Gamma(t)}} \| D_\Gamma{}\bU_T \|^2 \,dS
        \end{align}
  
  \begin{remark} 
  The suggested above momentum and energy laws are due to the structure of $\mathcal{NS}^*(\bU_T)=\bb_T$ equations. If one takes into account $U_N=0$ and $b_N=-\pi \kappa + N(\bU_T)$ then the non-split momentum law of surface Navier-Stokes can be recovered.
 
  \end{remark}
\section{Two-phase flow with a fluidic interface}\label{coupledNS}
Our goal is to couple the surface flow and the bulk flow such that momentum and energy balance laws are kept valid a priori, similar to the mass balance law being correct automatically once the geometrical compatibility condition is fulfilled. We can couple these systems kinematicaly via velocities,  dynamically via forces or any combination of them so the total number of interface conditions is consistent with the number of conditions on an interface between bulk flows, which is 6. 
The kinematical coupling can be continuous and discontinuous, however the local conservation of the mass guarantees that the normal bulk velocity should be continuous across the interface.
Assuming there are no purely external forces like gravity, the dynamical coupling should be performed through an unknown force $\bb$ that, from one hand, enters the surface flow momentum equation, and from another hand balances the jump of stress vector across the interface to guarantee local conservation of the momentum flux: $$\jump{\sigma}\bn+\bb=0$$
Since the surface Navier-Stokes system includes intrinsic to the interface variables $\bU$ and $\bb$ we actually need $6+3+3=12$ condition to define a fluidic interface. This coupling technique results in the two models BI and BII.

\begin{itemize}
   \item[BI] \textit{Continuous coupling with fluidic interface}
\begin{align}
 \jump{\bu}&=0\qquad \qquad &\text{ on }\Gamma(t)\\
       \jump{\sigma}\bn+\bb&=0&\text{ on }\Gamma(t)\\
       \bU &= \bu{}\hfill{}&\text{ on }\Gamma(t)\\
         \mathcal{NS}_\Gamma(\bU)&=\bb&\text{ on }\Gamma(t)
   \end{align}

 \end{itemize}

    For the interface above $\jump{\sigma{}}\,\bn=-\bb$ and $\jump{\bu\cdot{}\sigma{}}\,\bn=-\bU\cdot{}\bb$ and the total bulk momentum  and energy  can be expressed as follows:
     \begin{align}
        &\frac{dQ}{dt}=-\int_{\Gamma}\bb\,dS +r \label{Total energy}
\end{align}
    \begin{align}
        &\frac{dE}{dt}=-\int_{\Omega_\pm}2\mu\norm{D \bu}^2 dV - \int_{\Gamma(t)}\bU\cdot{}\bb\,dS +R 
\end{align}
or, with the help of the balance of surface momentum and energy:
\begin{align}
        &\frac{d}{dt}(Q+Q_\Gamma)=r \label{super_Total energy}
\end{align}
  \begin{align}
        &\frac{d}{dt}(E+E_\Gamma)=-\int_{\Omega_\pm(t)}2\mu\norm{D \bu}^2 dV -2\mu_\Gamma\int_{\Gamma(t)}\norm{D_\Gamma \bU}^2 dS  +R 
\end{align}

\begin{itemize}
   \item[BII] \textit{Friction slip on fluidic interface}
 \begin{align}
 \jump{\bu }\cdot \bn&=0&\text{ on }\Gamma(t)\\
   u_N &= U_N\hfill{}&\text{ on }\Gamma(t)\\
      \bP{\sigma_-\bn}&=-f_-(\bP\bu^--\bU_T)  &\text{ on }\Gamma(t)\\
  \bP{\sigma_+\bn}&=f_+(\bP\bu^+-\bU_T)  &\text{ on }\Gamma(t)\\
         \jump{\sigma}\bn+\bb&=0 &\text{ on }\Gamma(t)
         \\
         \mathcal{NS}_\Gamma(\bU)&=\bb&\text{ on }\Gamma(t)
   \end{align}
   \end{itemize}
  
   For the  interface above $\jump{\sigma{}}\,\bn=-\bb$ and 
   \begin{align*}
       \jump{\bu\cdot{}\sigma{}}\,\bn=-\bU\cdot{}\bb -f_-(\bP\bu^--\bU_T)^2-f_+(\bP\bu^+-\bU_T)^2
   \end{align*} where we extracted the term for the friction dissipation
    \begin{align}
        F_\pm= \int_\Gamma f_-(\bP\bu^--\bU_T)^2+f_+(\bP\bu^+-\bU_T)^2 \,dS
    \end{align}
    and the total bulk momentum and  energy  can be expressed as follows:
      \begin{align}
        &\frac{dQ}{dt}=-\int_{\Gamma}\bb\,dS +r \label{neTotal energy}
\end{align}
    \begin{align}
        &\frac{dE}{dt}=-\int_{\Omega_\pm}2\mu\norm{D \bu}^2 dV - \int_{\Gamma(t)}\bU\cdot{}\bb\,dS -F_\pm +R 
\end{align}
or, with the help of the balance of surface momentum and energy:
\begin{align}
        &\frac{d}{dt}(Q+Q_\Gamma)=r \label{nesuperTotal energy}
\end{align}
  \begin{align}
        &\frac{d}{dt}(E+E_\Gamma)=-\int_{\Omega_\pm(t)}2\mu\norm{D \bu}^2 dV -2\mu_\Gamma\int_{\Gamma(t)}\norm{D_\Gamma \bU}^2 dS -F_\pm+R 
\end{align}

Both presented models have correct total mass, momentum and energy balance laws.
\subsection{Stationary fluidic interfaces}
Here we would like to consider the possibility of a coupling of the stationary fluidic interface with bulk flows for modeling purposes. Essentially, we replace the general fluidic interface model
$
    \mathcal{NS}(\bU)=\bb
$
with its stationary counterpart derived previously:
\begin{align}
 U_N&=0\\
   \mathcal{NS}^*(\bU_T)&=\bb_T\\
    b_N&=-\pi \kappa+N(\bU_T)
\end{align}
    Notice that thus we replace 3 conditions with 4 conditions and one condition should be relaxed. However, we choose not to relax the continuity of momentum flux in order to keep the total momentum balance law intact in the suggested models below.
\begin{itemize}
 \item[SI] \textit{Continuous coupling on stationary fluidic interface}
\begin{align}
 \jump{\bu}&=0\qquad \qquad &\text{ on }\Gamma(t)\\
       \jump{\sigma}\bn+\bb&=0&\text{ on }\Gamma(t)\\
       \bU &= \bu{}\hfill{}&\text{ on }\Gamma(t)\\
       U_N&=0&\text{ on }\Gamma(t)\\
   \mathcal{NS}^*(\bU_T)&=\bb_T&\text{ on }\Gamma(t)
   \end{align}
   
Here we choose to relax  the normal force equation, $b_N=-\pi \kappa+N(\bU_T)$, of the surface Navier-Stokes equation. This means that $b_N$ will be computed from the momentum flux equation, $\jump{\sigma}\bn+\bb=0$, after the whole coupled system is solved, and it may be different from the value of the relaxed condition. Indeed, the total bulk momentum and  energy laws can be expressed as follows:
\begin{align}
        &\frac{d}{dt}(Q+Q_\Gamma)=\frac{d}{dt}(Q+Q^*_\Gamma)=\int_{\Gamma}(b_N + \pi \kappa-N(\bU_T)) \bn\,dS +r \label{ha_Total energy}
\end{align}
  \begin{align}
        &\frac{d}{dt}(E+E_\Gamma)=\frac{d}{dt}(E+E^*_\Gamma)=-\int_{\Omega_\pm(t)}2\mu\norm{D \bu}^2 dV -2\mu_\Gamma\int_{\Gamma(t)}\norm{D_\Gamma \bU}^2 dS  +R 
\end{align}
  
   \item[qSI] \textit{Continuous coupling on quasi stationary fluidic interface}
\begin{align}
 \jump{\bu}&=0\qquad \qquad &\text{ on }\Gamma(t)\\
       \jump{\sigma}\bn+\bb&=0&\text{ on }\Gamma(t)\\
       \bU &= \bu{}\hfill{}&\text{ on }\Gamma(t)\\
   \mathcal{NS}^*(\bU_T)&=\bb_T&\text{ on }\Gamma(t)\\
    b_N&=-\pi \kappa+N(\bU_T)&\text{ on }\Gamma(t)
   \end{align}
   
   Here we choose to relax the condition of a stationary surface, $U_N = 0$. This means that while we use the equilibrium force  $b_N=-\pi \kappa+N(\bU_T)$ to balance the bulk stress jump, we cannot expect surface to be at the same position.
  And the total bulk momentum and  energy  can be expressed as follows:
\begin{align}
        &\frac{d}{dt}(Q+Q_\Gamma) - \int_{\Gamma(t)}\rho_\Gamma (\dot{U}_N \bn +{U}_N \dot{\bn})   \,dS =\frac{d}{dt}(Q+Q^*_\Gamma)=r \label{ad_Total energy}
\end{align}
  \begin{align}
        &\frac{d}{dt}(E+E_\Gamma)-\int_{\Gamma(t)}\rho_\Gamma{}U_N\dot{U}_N\,dS=\frac{d}{dt}(E+E^*_\Gamma)\\
        &=-\int_{\Gamma{}(t)}{U_N (-\pi \kappa+N(\bU_T))}\,dS-\int_{\Omega_\pm(t)}2\mu\norm{D \bu}^2 dV -2\mu_\Gamma\int_{\Gamma(t)}\norm{D_\Gamma \bU_T}^2 dS +R 
\end{align} 

\item[SII]\textit{Friction slip on stationary fluidic interface with external force}
\begin{align}
  \jump{\bu }\cdot \bn&=0&\text{ on }\Gamma(t)\\
   u_N &= U_N\hfill{}&\text{ on }\Gamma(t)\\
      \bP{\sigma_-\bn}&=-f_-(\bP\bu^--\bU_T)  &\text{ on }\Gamma(t)\\
  \bP{\sigma_+\bn}&=f_+(\bP\bu^+-\bU_T)  &\text{ on }\Gamma(t)\\
         \jump{\sigma}\bn+\bb&=0 &\text{ on }\Gamma(t)\\
   U_N&=0&\text{ on }\Gamma(t)\\
   \mathcal{NS}^*(\bU_T)&=\bb_T + \bb^e_T&\text{ on }\Gamma(t)
   \end{align}
Here we choose to relax  the normal force equation, $b_N=-\pi \kappa+N(\bU_T)$, of the surface Navier-Stokes equation. This means that $b_N$ will be computed from the momentum flux equation, $\jump{\sigma}\bn+\bb=0$, after the whole coupled system is solved, and it may be different from the value of the relaxed condition.
And the total bulk momentum can be expressed as follows:
\begin{align}
        &\frac{d}{dt}(Q+Q_\Gamma)=\int_{\Gamma}\bb_N\,dS +r \label{bu_Total energy}
\end{align}
   The total energy balance law for the interface above can be derived similarly to the BII interface:
  \begin{align}
    \frac{dE}{dt}+\frac{dE_{\Gamma}}{dt}&=
    -\int_{\Omega_\pm}2\mu\norm{D \bu}^2 dV-2\mu_\Gamma\int_{{\Gamma(t)}} \| D_\Gamma{}\bU_T\|^2\,dS+F^e -F_\pm+R
\end{align}
    where  
\begin{align}
        F^e= \int_{\Gamma}\bU_T\cdot\bb^e_T\,dS
    \end{align}

   \item[qSII]\textit{Friction slip on quisi stationary fluidic interface with external force}
\begin{align}
  \jump{\bu }\cdot \bn&=0&\text{ on }\Gamma(t)\\
   u_N &= U_N\hfill{}&\text{ on }\Gamma(t)\\
      \bP{\sigma_-\bn}&=-f_-(\bP\bu^--\bU_T)  &\text{ on }\Gamma(t)\\
  \bP{\sigma_+\bn}&=f_+(\bP\bu^+-\bU_T)  &\text{ on }\Gamma(t)\\
         \jump{\sigma}\bn+\bb&=0 &\text{ on }\Gamma(t)\\
   \mathcal{NS}^*(\bU_T)&=\bb_T+\bb^e_T&\text{ on }\Gamma(t)\\
   b_N&=-\pi \kappa+N(\bU_T)&\text{ on }\Gamma(t)
   \end{align}
   
   \end{itemize}
   
   Here we choose to relax $U_N = 0$, the condition of a purely tangential interface motion. As a consequence, while we use the equilibrium force  $b_N=-\pi \kappa+N(\bU_T)$ to balance the bulk stress jump, we cannot expect surface to be at the same position. Since the $U_N\neq0$ it may be more consistent to keep the $\kappa U_N$ term in the incompressibility condition of  $\mathcal{NS}^*(\bU_T)=\bb_T+\bb^e$.
   And the total bulk momentum can be expressed as follows:
\begin{align}
        &\frac{d}{dt}(Q+Q_\Gamma)=r \label{bulk_Total energy}
\end{align}
 
   The energy balance law for the interface above can be derived similarly to the BII interface:
  \begin{align}
    \frac{dE}{dt}+\frac{dE^*_{\Gamma}}{dt}&=-\int_{\Gamma}U_N b_N \,dS
    -\int_{\Omega_\pm}2\mu\norm{D \bu}^2 dV-2\mu_\Gamma\int_{{\Gamma(t)}} \| D_\Gamma{}\bU_T\|^2\,dS+F^e -F_\pm+R
\end{align}

\section{Discussion}
 In this note we have studied balance laws of different models of bulk-surface-bulk flows which is helpful for the future research.
 
 In the previous section we presented a kinematically coupled model BI and a dynamically coupled model BII of bulk-surface-bulk flows. Both models have physically correct mass, energy, momentum structures. These structures are considered as references in comparison with the structures of simplified models SI, qSI, SII, qSII. 

\bibliographystyle{plain}
\bibliography{literatur}

@article{jankuhn2017incompressible,
  title={Incompressible fluid problems on embedded surfaces: Modeling and variational formulations},
  author={Jankuhn, Thomas and Olshanskii, Maxim A and Reusken, Arnold},
  journal={Interfaces and Free Boundaries},
  volume = {20},
 number = {3},
 pages = {353--377},
 year = {2018}

}

\end{document}